\newcommand{\be}{\begin{equation}}
\newcommand{\ee}{\end{equation}}
\newcommand{\ba}{\begin{eqnarray}}
\newcommand{\ea}{\end{eqnarray}}
\newcommand{\tr}{\operatorname{Tr}}
\newtheorem{definition}{Definition}
\newtheorem{thm}{Theorem}
\newtheorem{cor}{Corollary}
\begin{document}
	
\title{Operational characterization of quantumness of unsteerable bipartite states }

\author{Debarshi Das}
 \email{debarshidas@jcbose.ac.in}
\affiliation{Centre for Astroparticle Physics and Space Science (CAPSS), Bose Institute, Block EN, Sector V, Salt Lake, Kolkata 700 091, India}
		
	\author{Bihalan Bhattacharya}
	\email{bihalan@gmail.com}
	\affiliation{S. N. Bose National Centre for Basic Sciences, Block JD, Sector III, Salt Lake, Kolkata 700 098, India}
		
	\author{Chandan Datta}
	\email{chandan@iopb.res.in}
	\affiliation{Institute of Physics, Sachivalaya Marg, Bhubaneswar 751005, Odisha, India}
	\affiliation{Homi Bhabha National Institute, Training School Complex, Anushakti Nagar, Mumbai 400085, India}
	\author{Arup Roy}
	\email{arup145.roy@gmail.com}
	\affiliation{Physics and Applied Mathematics Unit, Indian Statistical Institute, 203 B. T. Road, Kolkata 700108, India}
	
	\author{C. Jebaratnam}
	\email{jebarathinam@gmail.com}
	\affiliation{S. N. Bose National Centre for Basic Sciences, Block JD, Sector III, Salt Lake, Kolkata 700 098, India}
	
	\author{A. S. Majumdar}
	\email{archan@bose.res.in}
	\affiliation{S. N. Bose National Centre for Basic Sciences, Block JD, Sector III, Salt Lake, Kolkata 700 098, India}
	
	\author{R. Srikanth}
	\email{srik@poornaprajna.org}
	\affiliation{Poornaprajna Institute of Scientific Research Bangalore- 560 080, Karnataka, India}

\begin{abstract}
Recently, the  quantumness of local  correlations arising  from separable  states
in the context  of a
Bell  scenario  has  been studied  and
linked with  superlocality [\href{https://journals.aps.org/pra/abstract/10.1103/PhysRevA.95.032120}{Phys. Rev. A {\bf 95}, 032120 (2017)}].
Here we investigate the quantumness of unsteerable
correlations   in  the   context   of  a   given  steering   scenario.
Generalizing the concept  of superlocality, we define
  as   \textit{super-correlation},  the   requirement  for   a  larger
  dimension of  the preshared randomness to  simulate the correlations
  than  that  of  the  quantum   states  that  generate  them.   Since
  unsteerable states  form a  subset of  Bell local  states, it  is an
  interesting   question   whether    certain   unsteerable states  can   be
  super-correlated.  Here, we answer this question in the affirmative.
In particular, the quantumness  of certain unsteerable correlations
can be pointed out by the notion of \textit{super-unsteerability}, the
requirement for a larger dimension  of the classical variable that the
steering party has  to preshare with the trusted  party for simulating
the correlations than that of the quantum states which reproduce them.
This provides  a generalized approach  to quantify the  quantumness of
unsteerable correlations in convex operational theories.
\end{abstract} 
	
	\pacs{}
	
	\maketitle
	
\section{INTRODUCTION} 
Ideas and concepts of classical physics significantly differ from that of quantum mechanics (QM). A pioneering contribution showing an incompatibility between local-realism (which is a classical concept) and QM is the Bell-CHSH (Bell-Clauser-Horne-Shimony-Holt) inequality \cite{Bell, chsh, bell2}, which shows that measurements on certain spatially separated system can lead to nonlocal correlations which cannot be explained by local hidden variable (LHV) theory. Bell-CHSH inequality puts an upper bound on the correlations admitting any LHV model. Violation of Bell-CHSH inequality is though, not a defining nonclassical feature of QM as there are post-quantum correlations obeying the no-signalling (NS) principle, which also violate a Bell inequality. Nonlocality in QM is limited by the Tsirelson bound \cite{tsi}. Motivated by this fact, Popescu and Rohrlich proposed NS correlations which are more nonlocal than quantum correlations \cite{pr}.\\

In generalized NS theory, the only constraint on the correlations is the NS principle \citep{barrett, masanes}. The set of NS correlations form a polytope which can be  categorized by nonlocal and local vertices in contrast to the set of correlations arising out from QM which form convex set but fail to form a polytope \cite{pp}. Since QM correlations are contained within the NS polytope, any QM correlation can be written as a convex combination of the extremal boxes of the NS polytope. One of the goals of studying generalized NS theory is to find out how one can single out QM from other NS theories \cite{bell2, ps, ps2}.\\

The seminal argument by Einstein, Podolsky and Rosen (EPR) \cite{epr} to demonstrate the incompleteness of QM, motivated Schrodinger with the concept of `quantum steering' \cite{scro}. The concept of steering in the form of a task has been introduced recently \cite{steer, steer2}. The task of quantum steering is to prepare different ensembles at one part of a bipartite system by performing local quantum measurements on another part of the bipartite system in such a way that these ensembles cannot be explained by a local hidden state (LHS) model. This implies that the steerable correlations cannot be reproduced by a local hidden variable-local hidden state (LHV-LHS) model. In recent years, investigations related to quantum steering have been acquiring considerable significance, as evidenced by a wide range of studies \cite{st8, st10, steer22, steer3, st4, st9, st5, steer24, s6, new}. \\

Bell-nonlocal states form a subset of  steerable states which also form a subset of entangled states \cite{steer, st11}. However, unlike quantum nonlocality \cite{bell2} and entanglement \cite{ent}, the task of quantum steering is inherently asymmetric \cite{st7}. In this case, the outcome statistics of one subsystem (which is being ‘steered’) is due to valid QM measurements on a valid QM state. On the other hand, there is no such constraint for the other subsystem. The study of quantum steering also finds applications in semi device independent scenario where the party, which is being ‘steered’, has trust on his/her quantum device but the other party's device is untrusted. Secure quantum key distribution (QKD) using quantum steering has been demonstrated \cite{st12}, where one party cannot trust his/her devices.\\

Quantum discord \cite{disc, disc2, disc3} and local broadcasting \cite{bc} indicate the existence of quantumness even in separable states, and these concepts can be linked with the non-commutativity of measurements \cite{disc4}. On the other hand, quantum nonlocality and steering are also associated with incompatibility of measurements \cite{com1, com2, com, com3}. From an operational perspective, nonlocal or steerable states require pre-shared randomness with non-zero communication cost \cite{cc1, cc2}. Here we are concerned with the question of how to give such an operational characterization to the quantumness of local or unsteerable correlations. \\

Such a question has been partially addressed in case of local correlations. It has been demonstrated that there exists some local correlations for which the dimension of the pre-shared randomness required to simulate the correlations exceeds the dimension of the quantum system reproducing them by applying suitable measurements. This is known as superlocality \cite{sl1, sl2, sl3, sl4, sl5, sl6}. On the other hand, any unsteerable correlation can be reproduced by appropriate
measurements on an appropriate separable state \cite{nsst, guhne}.  Moreover,  Moroder  et. al.
\cite{guhne} have shown that the unsteerable correlations arising from
$2  \times 2  \times 2$  experimental  scenario ($2$ parties, $2$ measurement settings per party, $2$ outcomes per measurement setting) can  be reproduced  by
classical-quantum states (which form a  subset of the set of separable
  states) with dimension at the untrusted party $d \leq 4$.\\

In this work,  our motivation is to  analyze  the
  resource requirement for simulating  unsteerable correlations in the
  context  of a  given steering  scenario.  Extending  the concept  of
  superlocality,   we   define   \textit{super-correlation}   as   the
  requirement for  a larger dimension  of the preshared  randomness to
  simulate  the correlations  than  that of  the  quantum states  that
  generate them.   Superlocality is  an instance  of super-correlation.
  Since unsteerable states  form a  subset of  Bell local  states, it  is an
  interesting   question   whether    certain   unsteerable states  can   be
  super-correlated. Here,  we  find that  certain unsteerable  correlations
  evidence     super-correlation,      a     phenomenon      we     term
  ``super-unsteerability''.   In  other  words,  we  show  that
  quantumness   is   necessary   to  reproduce   certain   unsteerable
  correlations in  the scenario  where the  dimension of  the resource
  reproducing the correlations is restricted.
More specifically, we show that there are certain unsteerble correlations in the $2 \times 2 \times 2$ experimental scenario ($2$ parties, $2$ measurement settings per party, $2$ outcomes per measurement setting) whose simulation with LHV-LHS model requires the steering party to preshare hidden variables with dimension exceeding the local Hilbert space dimension of the quantum systems (generating the given unsteerable correlation) at the steering party's side. This is termed as ``super-unsteerability".\\

The plan of the paper is as follows. 
 In Section II the basic notions of NS polytope and the fundamental ideas of quantum steering has been presented. Our purpose is to decompose the given NS correlation
in terms of convex combinations of extremal boxes of NS polytope which
leads  to a  LHV-LHS  decomposition  of the  given  correlation \cite{DDJ+17}. In Section III, we present formal definition of super-unsteerability and demonstrate some specific examples of it. In Section IV, we illustrate how quantumness is captured by the notion of super-unsteerability. The inequivalence between superlocality and super-unsteerability has been demonstrated in Section V. Finally, in the concluding Section VI, we elaborate a bit on the significance of the results obtained.

\section{Framework}

\subsection{No-signalling Boxes}

In this work, we are interested in generalized NS bipartite correlations which are treated as “boxes” shared between two parties, say Alice and Bob. The input variables on Alice's and Bob's sides are denoted by $x$ and $y$ respectively, and the outputs are denoted by $a$ and $b$ respectively. We restrict ourselves to the probability space in which the boxes have binary inputs and binary outputs, i.e., $x, y, a, b \in \{0, 1\}$. In this case, the state of every box is given by the set of 16 joint probability distributions $p(ab|xy)$.  A bipartite box $P$ = $P(ab|xy)$ := $\{ p(ab|xy) \}_{a,x,b,y}$ is the set of joint probability distributions $p(ab|xy)$ for all possible $a$, $x$, $b$, $y$. The single-partite   
box $P(a|x)$ := $\{p(a|x)\}_{a,x}$ of a NS box $P(ab|xy)$ 
is the set of marginal probability distributions $p(a|x)$ for all possible $a$ and $x$; which are given by,
\be
p(a|x)=\sum_b p(ab|xy), \quad \forall a,x,y.
\ee
The single-partite box $P(b|y)$ := $\{p(b|y)\}_{b,y}$ of a NS box $P(ab|xy)$ is the set of marginal
probability distributions 
$p(b|y)$ for all possible $b$ and $y$; which are given by,
\be
p(b|y)=\sum_a p(ab|xy), \quad \forall x,b,y.
\ee
\\

A NS box  $P(ab|xy)$ is  nonlocal if it cannot be reproduced by a LHV model, 
\begin{equation}
p(ab|xy)=\sum_\lambda p(\lambda) p(a|x,\lambda)p(b|y,\lambda) \hspace{0.3cm} \forall a,b,x,y;
\end{equation}
where $\lambda$ denotes shared randomness which occurs with probability $p(\lambda)$; each $p(a|x,\lambda)$ and $p(b|y,\lambda)$ are conditional probabilities. The set of local boxes which have a LHV model forms a convex polytope called local 
polytope.
In the case of two-binary-inputs and two-binary-outputs Bell scenario, the local polytope has $16$ extremal boxes 
which are local-deterministic boxes given by,
\be \label{LDB}
    P_{D}^{\alpha \beta \gamma \epsilon} (ab|xy) = 
\begin{dcases}
    1,& \text{if } a = \alpha x \oplus \beta, b = \gamma y \oplus \epsilon \\
    0,              & \text{otherwise}.
\end{dcases}
\ee
Here, $\alpha, \beta, \gamma, \epsilon \in \{0,1\}$ and $\oplus$ denotes addition modulo $2$. Any local box can be written
as a convex mixture of the local-deterministic boxes. All the local-deterministic boxes as defined above
can be written as the product of marginals corresponding to Alice and Bob,
i.e., $P_D^{\alpha\beta\gamma\epsilon}(ab|xy)=P^{\alpha\beta}_D(a|x)P^{\gamma\epsilon}_D(b|y)$,
with the  deterministic box on Alice's side given by
\begin{equation}
P_D^{\alpha\beta}(a|x)=\left\{
\begin{array}{lr}
1, & a=\alpha x\oplus \beta\\
0 , & \text{otherwise}\\
\end{array}
\right. 
\label{}
\end{equation} 
and the  deterministic box on Bob's side given by,
\begin{equation}
P_D^{\gamma\epsilon}(b|y)=\left\{
\begin{array}{lr}
1, & b=\gamma x\oplus \epsilon\\
0 , & \text{otherwise}.\\
\end{array}
\right.   
\label{}
\end{equation}

A local box satisfies the complete set of Bell  inequalities \cite{werner}. In the case 
of  two-binary-inputs and two-binary-outputs, the Bell-CHSH inequalities \cite{chsh} are given by,
\begin{eqnarray}
\label{chsh}
\mathcal{B}_{\alpha \beta \gamma} =&& (-1)^{\gamma} \langle A_0 B_0 \rangle + (-1)^{\beta \oplus \gamma} \langle A_0 B_1 \rangle \nonumber\\
&&+ (-1)^{\alpha \oplus \gamma} \langle A_1 B_0 \rangle + (-1)^{\alpha \oplus \beta \oplus \gamma \oplus 1} \langle A_1 B_1 \rangle \leq 2,
\end{eqnarray}
where $\alpha, \beta, \gamma \in  \{0, 1 \}$,   $\langle A_x B_y \rangle =  \sum_{a,b} (-1)^{a\oplus b} p(ab|x y)$,
form the complete set of Bell inequalities.
All these tight Bell inequalities form the nontrivial facets for the local polytope. 
All nonlocal boxes lie outside the local polytope and violate a Bell inequality.

The set of all bipartite  two-input-two-output NS boxes forms an $8$ dimensional convex polytope with $24$ extremal boxes 
\cite{barrett}, which can be divided into two classes:  i) nonlocal boxes having $8$ Popescu-Rohrlich (PR) boxes as extremal boxes, which are given by,
\be
    P_{PR}^{\alpha \beta \gamma} (ab|xy) = 
\begin{dcases}
    \frac{1}{2},& \text{if } a \oplus b = x.y \oplus \alpha x \oplus \beta y \oplus \gamma \\
    0,              & \text{otherwise},
\end{dcases}
\ee
and  ii) local boxes having $16$ local-deterministic boxes as extremal boxes, which are given in Eq. (\ref{LDB}). The extremal boxes in a given class are equivalent under ``local reversible operations" (LRO). By using LRO Alice and Bob can convert any extremal box in one class into any other extremal box within the same class. LRO is designed \cite{barrett} as follows: Alice may relabel her inputs: $x \rightarrow x \oplus 1$, and she may relabel her outputs (conditionally on the input) : $a \rightarrow a \oplus \alpha x \oplus \beta$; Bob can perform similar operations.

\subsection{Quantum Steering}

Let us consider a steering scenario where two spatially seperated parties, say Alice and Bob, share an unknown quantum system $\rho_{AB}\in \mathcal{B}(\mathcal{H}_A \otimes \mathcal{H}_B)$. Here $\mathcal{B}(\mathcal{H}_A \otimes \mathcal{H}_B)$ stands for the set of all bounded linear operators acting on the Hilbert space $\mathcal{H}_A \otimes \mathcal{H}_B$.  Alice 
performs a set of black-box measurements  and the Hilbert-space dimension of Bob's 
subsystem is known. Such a scenario is called one-sided device-independent since
Alice's measurement operators $\{M_{a|x}\}_{a,x}$ are unknown. The steering scenario 
is completely characterized by the set of unnormalized conditional states on Bob's
side $\{\sigma_{a|x}\}_{a,x}$, which is called an unnormalized assemblage. Each element
in the unnormalized assemblage is given by $\sigma_{a|x}=p(a|x)\rho_{a|x}$,  where $p(a|x)$ is the conditional probability of getting the outcome $a$ when Alice performs the measurement $x$;
$\rho_{a|x}$ is the normalized conditional state on Bob's side.
Quantum theory predicts that all valid assemblages should satisfy the following criteria:
\begin{equation}
\sigma_{a|x}=\tr_A ( M_{a|x} \otimes \openone \rho_{AB}) \hspace{0.5cm} \forall \sigma_{a|x} \in \{\sigma_{a|x}\}_{a,x}
\end{equation}
Let $\Sigma^{S}$ denote the set of all valid assemblages. \\

In the above scenario, Alice demonstrates steerability to Bob if the assemblage does not have a local hidden state (LHS) model, i.e., if for all $a$, $x$, there is no decomposition of $\sigma_{a|x}$ in the form,
\begin{equation}
\sigma_{a|x}=\sum_\lambda p(\lambda) p(a|x,\lambda) \rho_\lambda,
\end{equation}
where $\lambda$ denotes classical random variable which occurs with probability 
$p(\lambda)$; $\rho_{\lambda}$ are called local hidden states which satisfy $\rho_\lambda\ge0$ and
$\tr\rho_\lambda=1$. Let $\Sigma^{US}$ denote the set of all unsteerable assemblages. 
Any element in the given assemblage $\{\sigma_{a|x}\}_{a,x} \in \Sigma^{US}$ can be decomposed
in terms of deterministic distributions  as follows:
\begin{equation}
\sigma_{a|x}=\sum_\chi D(a|x,\chi) \sigma_\chi,
\end{equation}
where $D(a|x,\chi):=\delta_{a,f(x,\chi)}$ is the single-partite
extremal conditional probability for Alice determined by the variable 
$\chi$ through the function $f(x,\chi)$ and $\sigma_\chi$ satisfy
$\sigma_\chi\ge0$ and $\tr \sum_{\chi} \sigma_\chi=1$ \cite{newpusey}.\\

Suppose Bob performs a set of projective measurements $\{\Pi_{b|y}\}_{b,y}$ on $\{\sigma_{a|x}\}_{a,x}$. Then the scenario is characterized by the set of measurement correlations, or box
between Alice and Bob $P(ab|xy)$:=$\{p(ab|xy)\}_{a,x,b,y}$, where $p(ab|xy)$ = $\tr ( \Pi_{b|y} \sigma_{a|x} )$. The box $P(ab|xy)$ detects steerability from Alice to Bob, 
iff it does not have a decomposition as follows \cite{steer, steer2}: 
\begin{equation}
p(ab|xy)= \sum_\lambda p(\lambda) p(a|x,\lambda) p(b|y, \rho_\lambda) \hspace{0.3cm} \forall a,x,b,y; \label{LHV-LHS}
\end{equation}
where, $\sum_{\lambda} p(\lambda) = 1$, $p(a|x, \lambda)$ denotes an arbitrary probability distribution arising from local hidden variable (LHV) $\lambda$ ($\lambda$ occurs with probability $p(\lambda)$) and $p(b|y, \rho_{\lambda}) $ denotes the quantum probability of outcome $b$ when measurement $y$ is performed on local hidden state (LHS) $\rho_{\lambda}$. Hence, the box $P(ab|xy)$ will be called steerable iff it does not have a LHV-LHS model. In a given steering scenario, correlations having LHV-LHS model form a convex subset of the set of all correlations in that scenario.\\

 Till date various criteria for showing quantum steering have been demonstrated \cite{stt, stt2, stt3, stt5}, but none of these criteria is a necessary and sufficient condition for quantum steering. Only recently, a necessary and sufficient condition for quantum steering in the $2 \times 2 \times 2$ experimental scenario with mutually unbiased measurements at trusted party has been established \cite{stt6}. Suppose two spatially separated parties Alice and Bob each have a choice between two dichotomic measurements to perform: $\{ A_1, A_2\}$, $\{ B_1, B_2 \}$, and outcomes of $A$ are labeled $a \in \{0, 1 \}$ and similarly for the other measurements. Furthermore, suppose that $B_1$ and $B_2$ are two mutually unbiased measurements. In this scenario, the necessary and sufficient condition for quantum steering from Alice to Bob is given by,
\begin{align}
&\sqrt{\langle (A_1 + A_2) B_1 \rangle^2 + \langle (A_1 + A_2) B_2 \rangle^2 } \nonumber \\
&+\sqrt{\langle (A_1 - A_2) B_1 \rangle^2 + \langle (A_1 - A_2) B_2 \rangle^2 } \leq 2, \label{chshst}
\end{align}
where $\langle A_x B_y \rangle =  \sum_{a,b} (-1)^{a\oplus b} p(ab|xy)$. This inequality is called the analogous CHSH inequality for quantum steering.\\

Now, we are in a position to establish our results, i.e., demonstrating the notion of \textit{``super-unsteerability"} for certain unsteerable correlations.

\section{Super-unsteerability}

In this Section we are going to present the formal definition of the notion \textit{``super-unsteerability"} which is followed by some of its examples. Before that we present the definition of \textit{``superlocality"} \cite{sl1, sl2, sl3, sl4, sl5, sl6}. Consider the Bell scenario, where both parties perform black box measurements. In this scenario, superlocality is defined as follows:
\begin{definition}
Suppose we have a quantum state in $\mathbb{C}^{d_A}\otimes\mathbb{C}^{d_B}$
and measurements which produce a local bipartite box $P(ab|xy)$ := $\{ p(ab|xy) \}_{a,x,b,y}$.
Then, superlocality holds iff there is no decomposition of the box in the form,
\begin{equation}
p(ab|xy)=\sum^{d_\lambda-1}_{\lambda=0} p(\lambda) p(a|x, \lambda) p(b|y, \lambda) \hspace{0.3cm} \forall a,x,b,y,
\end{equation}
with dimension of the shared randomness/hidden variable $d_\lambda\le$ min($d_A$, $d_B$).  Here $\sum_{\lambda} p(\lambda) = 1$, $p(a|x, \lambda)$ and $p(b|y, \lambda)$ denotes arbitrary probability distributions arising from LHV $\lambda$ ($\lambda$ occurs with probability $p(\lambda)$).
\end{definition}

Now, consider a different scenario where one of the parties (say, Alice) performs black box measurements and another party (say, Bob) performs quantum measurements. In this steering scenario, we define the notion of \textit{``super-unsteerability"} as follows:
\begin{definition}
Suppose we have a quantum state in $\mathbb{C}^{d_A}\otimes\mathbb{C}^{d_B}$
and measurements which produce a unsteerable bipartite  box $P(ab|xy)$ := $\{ p(ab|xy) \}_{a,x,b,y}$.
Then, super-unsteerability holds iff there is no decomposition of the box in the form,
\begin{equation}
p(ab|xy)=\sum^{d_\lambda-1}_{\lambda=0} p(\lambda) p(a|x, \lambda) p(b|y, \rho_{\lambda}) \hspace{0.3cm} \forall a,x,b,y,
\end{equation}
with dimension of the shared randomness/hidden variable $d_\lambda\le d_A$.  Here $\sum_{\lambda} p(\lambda) = 1$, $p(a|x, \lambda)$ denotes an arbitrary probability distribution arising from LHV $\lambda$ ($\lambda$ occurs with probability $p(\lambda)$) and $p(b|y, \rho_{\lambda}) $ denotes the quantum probability of outcome $b$ when measurement $y$ is performed on LHS $\rho_{\lambda}$ in $\mathbb{C}^{d_B}$.
\end{definition}
Hence, in order to demonstrate super-unsteerability of a given unsteerable correlation, we have to consider a LHV-LHS model of the given correlation with minimum dimension of the shared randomness and have to check whether this minimum dimension is greater than the local Hilbert space dimension of the shared quantum system (reproducing the given unsteerable correlation) at the untrusted party's side (who steers the other party, in the present case Bob). In the following we describe the procedure adopted in the present study to minimize the dimension of the shared randomness associated with the LHV-LHS model of the given unsteerable correlation.\\

At first we decompose the given unsteerable correlation (which is local as well) in terms of local-deterministc boxes. Then this local deterministic boxes are written as the product of marginals corresponding to the two parties. This decomposition produces a classical simulation protocol with LHV model of the given unsteerable correlation. In order to reduce the dimension of the shared randomness in this decomposition of the given correlation, we make each probability distribution at Bob's side non-deterministic and keep each probability distribution at Alice's side as deterministic. If each non-deterministic distribution at Bob's side can be produced by performing quantum measurements in the given steering scenario from some pure state, then this decomposition gives a LHV-LHS model in the given steering scenario. However, the dimension of the shared randomness in the above decomposition of the given correlation may be further reduced if there are several sets of equal non-determinstic probability distributions (which have some quantum realisation as described earlier) at Bob's side. In this case, by taking each of the equal non-determinstic probability distributions at Bob's side as common and by making corresponding  probability distributions at Alice's side non-deterministic, the dimension of the shared randomness can be further reduced. This minimizes the dimension of the shared randomness in the above decomposition of the unsteerable correlation with different probability distributions (deterministic/non-deterministic) at Alice's side and with non-deterministic probability distributions having quantum realisations at Bob's side.

\subsection{Super-unsteerability: Example 1}

Consider the white noise-BB84 family defined as

\begin{equation}
\label{bb84}
P_{BB84}(ab|xy) = \frac{1 + (-1)^{a \oplus b \oplus x.y} \delta_{x,y} V }{4},
\end{equation}
where $V$ is a real number such that $0 < V \leq 1$; $x$, $y$ denote the input variables on Alice's and Bob's sides respectively; and $a$, $b$ denote the outputs on Alice's and Bob's sides respectively. We restrict ourselves to the probability space in which the boxes have binary inputs and binary outputs, i.e., $x, y, a, b \in \{0, 1\}$. The above box  is local as it does not violate a Bell-CHSH inequality (\ref{chsh}). Therefore, it can be reproduced by sharing classical randomness.
We now give an example of simulation of the white noise-BB84 family by using a quantum state which has quantumness. 
Consider that the two spatially separated parties (say, Alice and Bob) share the two qubit Werner state,
\begin{equation}
\label{w}
\rho_V = V | \psi^- \rangle \langle \psi^-| + \frac{1-V}{4} \mathbb{I}_4,
\end{equation}
where, $|\psi^- \rangle = \frac{1}{\sqrt{2}} (|01 \rangle - |10 \rangle)$ ($|0\rangle$ and $|1\rangle$ are the eigenstates of $\sigma_z$), $\mathbb{I}_4$ is the $4 \times 4$ identity matrix and $0 < V \leq 1$. The above states are entangled iff $V > \frac{1}{3}$. The Werner states $\rho_V$ have nonzero quantumness (as quantified by quantum discord \cite{disc, disc2, disc3, disc4}) for any $V>0$. The white noise-BB84 family can be produced from the 
two-qubit Werner state if Alice performs the projective measurements of observables corresponding to the operators $A_0 = - \sigma_z$ and $A_1 = \sigma_x$, and Bob performs projective measurements of observables corresponding to the operators $B_0 =  \sigma_z$ and $B_1 = \sigma_x$.\\ 

The BB84 family (\ref{bb84}) violates the analogous CHSH inequality for steering (\ref{chshst}) 
iff $V > \frac{1}{\sqrt{2}}$. Therefore, in this range the white noise-BB84 family detects steering in the $2 \times 2 \times 2$ experimental scenario where Alice performs black-box (uncharacterized) measurements and Bob performs two mutually unbiased qubit measurements. For instance, it detects steerability of the two-qubit Werner state in this range. Because the two-qubit Werner is steerable
in the above steering scenario iff $V>1/\sqrt{2}$. In the following, we will demonstrate that for $0<V\le1/\sqrt{2}$, the BB84 box demonstrates super-unsteerability in the $2 \times 2 \times 2$ steering scenario.

\subsubsection*{Simulating unsteerable white noise-BB84 family with LHV at one side and LHS at another side}

In the context of no-signaling polytope, the  white noise-BB84 distribution can be decomposed as
\begin{equation}
P_{BB84}(ab|xy) = V \bigg(\frac{P_{PR}^{000} + P_{PR}^{110}}{2}\bigg) + (1-V) P_N, \label{bb84box}
\end{equation}
where $P_N$ is the maximally mixed box, i.e., $P_N(a b| x y) = \frac{1}{4} \forall a, b, x, y$. We obtain
\begin{eqnarray}
P_{BB84}(ab|xy) = &&2V \frac{1}{2} \bigg(\frac{1}{2} P_{PR}^{000} + \frac{1}{2} P_N \bigg) + 2V \frac{1}{2} \bigg(\frac{1}{2} P_{PR}^{110} + \frac{1}{2} P_N \bigg)\nonumber\\&&  + (1-2V) P_N .
\end{eqnarray}
Each box in the above decomposition can be decomposed in terms of the local deterministic boxes as follows:
\begin{equation}
\frac{1}{2} P_{PR}^{000} + \frac{1}{2} P_N = \frac{1}{8} \sum_{\alpha, \beta, \gamma} P_D^{\alpha \beta \gamma (\alpha \gamma \oplus \beta)} (a b|x y),
\end{equation}
\begin{equation}
\frac{1}{2} P_{PR}^{110} + \frac{1}{2} P_N =  \frac{1}{8} \sum_{\alpha, \beta, \gamma} P_D^{\alpha \beta \gamma (\bar{ \alpha}             \bar{\gamma} \oplus \beta)} (a b|x y),
\end{equation}

where $\bar{ \alpha} = \alpha \oplus 1$, $\bar{\gamma}= \gamma \oplus 1$;
 and, 
\begin{equation}
P_N = \frac{1}{16} \sum_{\alpha, \beta, \gamma, \epsilon} P_{D}^{\alpha \beta \gamma \epsilon}(ab|xy).
\end{equation}
Using the above decompositions and the relation $P_{D}^{\alpha \beta \gamma \epsilon} (ab|xy) = P_D^{\alpha \beta}(a|x) P_D^{\gamma \epsilon}(b|y)$ one obtains
\begin{align}
P_{BB84}&(ab|xy)\nonumber\\  &= \frac{1}{4} P_D^{00} \bigg[ 2V \bigg( \frac{P_D^{00} + P_D^{10} + P_D^{01} + P_D^{10}}{4} \bigg) \nonumber\\
 &\quad+ (1-2V) \bigg( \frac{P_D^{00} + P_D^{10} + P_D^{01} + P_D^{11}}{4} \bigg) \bigg] \nonumber\\
&\quad+  \frac{1}{4} P_D^{01} \bigg[ 2V \bigg( \frac{P_D^{01} + P_D^{11} + P_D^{00} + P_D^{11}}{4} \bigg) \nonumber\\
&\quad+ (1-2V) \bigg( \frac{P_D^{00} + P_D^{10} + P_D^{01} + P_D^{11}}{4} \bigg) \bigg] \nonumber \\
&\quad+  \frac{1}{4} P_D^{10} \bigg[ 2V \bigg( \frac{P_D^{00} + P_D^{11} + P_D^{00} + P_D^{10}}{4} \bigg) \nonumber\\
&\quad+ (1-2V) \bigg( \frac{P_D^{00} + P_D^{10} + P_D^{01} + P_D^{11}}{4} \bigg) \bigg] \nonumber\\
&\quad+ \frac{1}{4} P_D^{11} \bigg[ 2V \bigg( \frac{P_D^{01} + P_D^{10} + P_D^{01} + P_D^{11}}{4} \bigg)  \nonumber\\
&\quad+ (1-2V) \bigg( \frac{P_D^{00} + P_D^{10} + P_D^{01} + P_D^{11}}{4} \bigg) \bigg] \nonumber\\
&= \sum_{\lambda=0}^{3} p(\lambda) P(a|x, \lambda) P(b|y, \rho_{\lambda}) ,
\label{bblhvlhs}
\end{align}
where, $P(a|x, \lambda)$ := $\{p(a|x, \lambda)\}_{a,x}$ is the set of conditional probabilities $p(a|x,\lambda)$ for all possible $a$ and $x$; $P(b|y, \rho_{\lambda})$ := $\{p(b|y, \rho_{\lambda})\}_{b,y}$ is the set of conditional probabilities $p(b|y,\rho_{\lambda})$ for all possible $b$ and $y$.\\
In the decomposition (\ref{bblhvlhs}) $p(0)$ = $p(1)$ = $p(2)$ = $p(3)$ = $\frac{1}{4}$, and \\
$P(a|x,0)$ = $P_D^{00}$, $P(a|x,1)$ = $P_D^{01}$, $P(a|x,2)$ = $P_D^{10}$, $P(a|x,3)$ = $P_D^{11}$. Now let us set
\begin{equation}      
P(b|y,\rho_0) =\begin{tabular}{c|cc}
 \backslashbox{(y)}{(b)} & (0) & (1) \\\hline
(0) & $\frac{1+V}{2}$ & $\frac{1-V}{2}$  \\
(1) & $\frac{1-V}{2}$ & $\frac{1+V}{2}$  \\
\end{tabular}=\langle \psi _0 | \{\Pi_{b|y}\}_{b,y} | \psi_0 \rangle,
\end{equation}
where each row and column corresponds to a fixed measurement $(y)$ and a fixed outcome $(b)$ respectively. This convention is presented in \cite{not}. Throughout the paper we will follow the same convention. We set the other $P(b|y,\rho_{\lambda})$'s as 
\begin{eqnarray}
&&P(b|y,\rho_1) = \begin{tabular}{c|cc}
 \backslashbox{(y)}{(b)} & (0) & (1) \\\hline
(0) & $\frac{1-V}{2}$ & $\frac{1+V}{2}$  \\
(1) & $\frac{1+V}{2}$ & $\frac{1-V}{2}$  \\ 
\end{tabular}=\langle \psi _1 | \{\Pi_{b|y}\}_{b,y} | \psi_1 \rangle,\\
&&P(b|y,\rho_2) =\begin{tabular}{c|cc}
 \backslashbox{(y)}{(b)} & (0) & (1) \\\hline
(0) & $\frac{1+V}{2}$ & $\frac{1-V}{2}$  \\
(1) & $\frac{1+V}{2}$ & $\frac{1-V}{2}$  \\ 
\end{tabular}=\langle \psi _2 | \{\Pi_{b|y}\}_{b,y} | \psi_2 \rangle,\\
&&P(b|y,\rho_3) =\begin{tabular}{c|cc}
 \backslashbox{(y)}{(b)} & (0) & (1) \\\hline
(0) & $\frac{1-V}{2}$ & $\frac{1+V}{2}$  \\
(1) & $\frac{1-V}{2}$ & $\frac{1+V}{2}$  \\ 
\end{tabular}=\langle \psi _3 | \{\Pi_{b|y}\}_{b,y} | \psi_3 \rangle,
\end{eqnarray}
where $\{\Pi_{b|y}\}_{b,y}$ corresponds to the set of projective measurements of two observables corresponding to the operators $B_0 = |\uparrow_0 \rangle \langle \uparrow_0|$ $-$ $|\downarrow_0 \rangle \langle \downarrow_0|$ and $B_1 =  |\uparrow_1 \rangle \langle \uparrow_1|$ $-$ $|\downarrow_1 \rangle \langle \downarrow_1|$, here $\{ |\uparrow_0 \rangle$, $|\downarrow_0 \rangle \}$ is an arbitrary orthonormal basis in the Hilbert space $\mathcal{C}^2$ and the orthonormal basis $\{ |\uparrow_1 \rangle, |\downarrow_1 \rangle \}$ in the Hilbert space $\mathcal{C}^2$ is such that aforementioned two measurements define two arbitrary projective mutually unbiased measurements in the Hilbert space $\mathcal{C}^2$.
The quantum states $|\psi_{\lambda} \rangle$ in the Hilbert space $\mathcal{C}^2$ that produce $P(b|y,\rho_{\lambda})$s ($\lambda=0,1,2,3$), are given as follows:
\begin{equation}
|\psi_0 \rangle = \sqrt{\frac{1+V}{2}} |\uparrow_0 \rangle + e^{i \phi_0} \sqrt{\frac{1-V}{2}} |\downarrow_0 \rangle,
\end{equation}
where $cos \phi_0 = - \frac{V}{\sqrt{1+V}\sqrt{1-V}}$,
\begin{equation}
|\psi_1 \rangle = \sqrt{\frac{1-V}{2}} |\uparrow_0 \rangle + e^{i \phi_1} \sqrt{\frac{1+V}{2}} |\downarrow_0 \rangle,
\end{equation}
where $cos \phi_1 =  \frac{V}{\sqrt{1+V}\sqrt{1-V}}$. 
\begin{equation}
|\psi_2 \rangle = \sqrt{\frac{1+V}{2}} |\uparrow_0 \rangle + e^{i \phi_2} \sqrt{\frac{1-V}{2}} |\downarrow_0 \rangle,
\end{equation}
where $cos \phi_2 =  \frac{V}{\sqrt{1+V}\sqrt{1-V}}$, 
\begin{equation}
|\psi_3 \rangle = \sqrt{\frac{1-V}{2}} |\uparrow_0 \rangle + e^{i \phi_3} \sqrt{\frac{1+V}{2}} |\downarrow_0 \rangle,
\end{equation}
where $cos \phi_3 = - \frac{V}{\sqrt{1+V}\sqrt{1-V}}$. Now, $|cos \phi_i| \leq 1$ ($i = 0,1,2,3$) implies that $V \leq \frac{1}{\sqrt{2}}$.\\

Hence, the LHV-LHS decomposition of  $P_{BB84}(ab|xy)$ for $ V \leq \frac{1}{\sqrt{2}}$ can be realized with hidden variable having dimension $4$ (with two arbitrary projective mutually unbiased measurements at trusted party).

\begin{thm}
The LHV-LHS decomposition of  unsteerable white noise-BB84 box cannot be realized with hidden variable having dimension $3$ or $2$ for the whole range $ V \leq \frac{1}{\sqrt{2}}$.
\end{thm}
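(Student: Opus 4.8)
The plan is to discard the full box and work only with the single-party marginals and the correlators, since for binary inputs and outputs $p(ab|xy)=\frac14\left[1+(-1)^a\langle A_x\rangle+(-1)^b\langle B_y\rangle+(-1)^{a\oplus b}\langle A_xB_y\rangle\right]$, so matching all sixteen probabilities is equivalent to matching these. A direct computation from (\ref{bb84}) gives $\langle A_x\rangle=\langle B_y\rangle=0$ and the correlator matrix $C=(\langle A_xB_y\rangle)=\mathrm{diag}(V,-V)$, which has rank two for $V>0$. Any model (\ref{LHV-LHS}) with $d_\lambda$ terms factorizes each correlator as $\langle A_xB_y\rangle=\sum_\lambda p(\lambda)\langle A_x\rangle_\lambda\langle B_y\rangle_\lambda$, where $\langle A_x\rangle_\lambda=\sum_a(-1)^a p(a|x,\lambda)\in[-1,1]$ and $\langle B_y\rangle_\lambda=\tr(\hat B_y\rho_\lambda)$ with $\hat B_y$ the $\pm1$ observable. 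Since the trusted party's two measurements are mutually unbiased, $\langle B_0\rangle_\lambda$ and $\langle B_1\rangle_\lambda$ are the two orthogonal Bloch components of $\rho_\lambda$, constrained by $\langle B_0\rangle_\lambda^2+\langle B_1\rangle_\lambda^2\le1$.

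First I would rule out $d_\lambda=2$ by a rank bound on a moment matrix. Writing $w_\lambda=(\langle A_0\rangle_\lambda,\langle A_1\rangle_\lambda,\langle B_0\rangle_\lambda,\langle B_1\rangle_\lambda)^T$ and $\Sigma=\sum_\lambda p(\lambda)\,w_\lambda w_\lambda^T$, the vanishing marginals say $\sum_\lambda p(\lambda)w_\lambda=0$, so the augmented vectors $(1,w_\lambda^T)^T$ obey $\sum_\lambda p(\lambda)(1,w_\lambda^T)^T(1,w_\lambda^T)=\mathrm{diag}(1,\Sigma)$. Being a sum of $d_\lambda$ rank-one positive terms, this matrix has rank at most $d_\lambda$, whence $1+\mathrm{rank}(\Sigma)\le d_\lambda$. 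The $A$–$B$ off-diagonal block of $\Sigma$ is exactly $C$, so $\mathrm{rank}(\Sigma)\ge\mathrm{rank}(C)=2$ and therefore $d_\lambda\ge3$; in particular $d_\lambda=2$ is impossible.

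For $d_\lambda=3$ the same inequality only forces $\mathrm{rank}(\Sigma)=2$, so the decisive extra input is that for an unsteerable box Alice's responses may be taken to be the extremal (deterministic) strategies, making $\vec a_\lambda:=(\langle A_0\rangle_\lambda,\langle A_1\rangle_\lambda)^T$ one of the four corners $(\pm1,\pm1)$. I would then write $C=\sum_\lambda p(\lambda)\,\vec a_\lambda\vec b_\lambda^T$ with $\vec b_\lambda=(\langle B_0\rangle_\lambda,\langle B_1\rangle_\lambda)^T$, subject to $\sum_\lambda p(\lambda)\vec a_\lambda=0$. The key convex-geometry fact is that the origin lies on an \emph{edge}, never in the interior, of the triangle spanned by any three of the four corners: deleting one corner always leaves an intact antipodal pair, and the zero-marginal condition then forces the weight on the remaining (odd) corner to vanish. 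Thus any three-term model collapses to a two-term one whose two Alice vectors are parallel to a single corner direction; the two rank-one matrices $\vec a_\lambda\vec b_\lambda^T$ then share a common column space, so $C$ would have rank at most one, contradicting $\mathrm{rank}(C)=2$. Hence neither $d_\lambda=3$ nor $d_\lambda=2$ admits a model and the minimum is four.

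The hard part is the $d_\lambda=3$ step. The covariance-rank bound by itself leaves a rank-two window open, and closing it relies essentially on the extremal structure of the untrusted party's responses together with the observation about the corners; I would therefore take particular care to justify that Alice's distributions can be restricted to the four deterministic strategies in the decomposition being minimized, and to verify quantitatively that the surviving two-corner correlator matrix cannot equal $\mathrm{diag}(V,-V)$ for any admissible Bloch vectors $\vec b_\lambda$. I expect the remaining $d_\lambda=2$ and two-corner computations to be routine once the rank obstruction is set up.
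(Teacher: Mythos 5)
Your $d_\lambda=2$ step is correct and, as it stands, is actually cleaner and more general than the paper's treatment of that case: the moment--matrix bound $1+\mathrm{rank}(\Sigma)\le d_\lambda$ together with $\mathrm{rank}(C)=2$ excludes \emph{every} two-term LHV--LHS model, with no assumption whatsoever on the form of Alice's response functions, whereas the paper reaches the same conclusion by enumerating deterministic assignments for Alice, arguing separately about decompositions obtained by ``merging'' terms of higher-dimensional deterministic ones, and checking rows of the probability table.

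The gap is exactly where you flagged it, in the $d_\lambda=3$ step, and it cannot be closed by a routine argument. At fixed $d_\lambda$ you may \emph{not} assume Alice's responses are deterministic: determinizing $p(a|x,\lambda)$ splits each value of $\lambda$ into several, and therefore increases the very dimension you are trying to bound, while Definition~2 of the paper explicitly allows arbitrary $p(a|x,\lambda)$. With that freedom a three-term model satisfying all of your constraints does exist in part of the range. Take $p=(\tfrac14,\tfrac14,\tfrac12)$, Alice vectors $\vec a_1=(1,1)$, $\vec a_2=(1,-1)$, $\vec a_3=(-1,0)$ (the third is non-deterministic for $x=1$), and Bob vectors $\vec b_1=(V,-2V)$, $\vec b_2=(V,2V)$, $\vec b_3=(-V,0)$; then $\sum_\lambda p_\lambda\vec a_\lambda=\sum_\lambda p_\lambda\vec b_\lambda=0$ and $\sum_\lambda p_\lambda\vec a_\lambda\vec b_\lambda^{T}=\mathrm{diag}(V,-V)$, which reproduces the full box, and each $\vec b_\lambda$ is an admissible pair of Bloch components for two mutually unbiased qubit measurements whenever $5V^2\le 1$. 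So your corner/edge argument, valid only for $\vec a_\lambda\in\{(\pm1,\pm1)\}$, does not dispose of the three-term case, and the rank bound leaves it open, as you noted. The paper's proof confronts precisely this issue by classifying non-deterministic decompositions as arising from deterministic ones of higher dimension through merging terms with identical Bob states, and then arguing that no four-term deterministic decomposition with repeated Bob states exists; your proposal has no counterpart to that step, and the model above (which is exactly of the merged form, obtained from a four-term deterministic decomposition with $\vec b_3=\vec b_4=(-V,0)$) shows that this is where the entire difficulty of the theorem is concentrated, not in the two-corner computation you describe as routine.
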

\begin{proof}
Note that the noisy BB84 box corresponds to the following joint probabilities:
 \begin{equation}      
 \label{matrix form BB84}
 P_{BB84}(ab|xy) =\begin{tabular}{c|cccc}
 \backslashbox{(x,y)}{(a,b)} & (0,0) & (0,1) & (1,0) & (1,1)\\\hline
(0,0) & $\frac{1+V}{4}$ & $\frac{1-V}{4}$ & $\frac{1-V}{4}$ & $\frac{1+V}{4}$ \\
(0,1) & $\frac{1}{4}$ & $\frac{1}{4}$ & $\frac{1}{4}$ & $\frac{1}{4}$ \\
(1,0) & $\frac{1}{4}$ & $\frac{1}{4}$ & $\frac{1}{4}$ & $\frac{1}{4}$ \\
(1,1) & $\frac{1-V}{4}$ & $\frac{1+V}{4}$ & $\frac{1+V}{4}$ & $\frac{1-V}{4}$ \\
\end{tabular},
\end{equation}
where each row and column corresponds to a fixed measurement setting $(xy)$ and a fixed outcome $(ab)$ respectively \cite{not}.
%
%
%
The marginal probabilities for Alice's and Bob's side are
\begin{equation}
\label{marginal}
p(a|x) =  \frac{1}{2} \hspace{0.4cm} \forall a,x
\end{equation}
and
\begin{equation}
\label{marginalB}
p(b|y) = \frac{1}{2} \hspace{0.4cm} \forall b,y
\end{equation}
respectively.\\

Now, let us try to construct a LHV-LHS decomposition of noisy BB84 box which requires a hidden variable of dimension $3$. Henceforth, we will denote a LHV-LHS decomposition of an unsteerable correlation having \textit{different deterministic probability distributions} at Alice's side and non-deterministic probability distributions (with quantum realization) at Bob's side, simply, by the term ``DLHV-LHS decomposition". Note that In $2 \times 2 \times 2$ Bell-scenario, hidden variable with dimnesion $d_{\lambda} \leq 4$ is sufficient for reproducing any local correlation \cite{sl1}. Since unsteerable correlations form a subset of the local correlations, in $2 \times 2 \times 2$ steering-scenario hidden variable with dimnesion $d_{\lambda} \leq 4$ is sufficient for reproducing any unsteerable correlation. Hence, constructing a LHV-LHS decomposition of noisy BB84 box with hidden variable dimension $3$ can be realized in the following two possible ways:\\

i) One has to construct a DLHV-LHS decomposition of noisy BB84 box (for $V \leq \frac{1}{\sqrt{2}}$) with a hidden variable of dimension $4$  as in Eq.(\ref{bblhvlhs}). Then taking equal non-deterministic distributions at Bob's side as common and making the corresponding probability distributions at Alice's side non-deterministic can reduce the dimension of the hidden variable to $3$. However, all the non-deterministic probability distributions (with quantum realization) at Bob's side $P(b|y, \rho_{\lambda})$ ($\lambda=0,1,2,3$) in the decomposition (\ref{bblhvlhs}) are unequal. In fact it can be easily checked that it is impossible to construct a DLHV-LHS decomposition of noisy BB84 box for the whole range $V \leq \frac{1}{\sqrt{2}}$ with a hidden variable of dimension $4$ with some/all non-deterministic probability distributions at Bob's side being equal to each other. Hence, the dimension of the hidden variable cannot be reduced from $4$ to $3$ in the DLHV-LHS decomposition of noisy BB84 box (for $V \leq \frac{1}{\sqrt{2}}$).\\

ii) One has to construct a DLHV-LHS decomposition of noisy BB84 box (for $V \leq \frac{1}{\sqrt{2}}$) with a hidden variable of dimension $3$. In the following we will check such possibility.\\

In this case the noisy BB84 box can be decomposed in the following way:
\begin{equation}
P_{BB84}(ab|xy) = \sum_{\lambda=0}^{2} p(\lambda) P(a|x, \lambda) P(b|y, \rho_{\lambda}).
\end{equation}
Here, $p(0)= q$, $p(1) = r$, $p(2) = s$ ($0 <q<1$, $0 <r<1$, $0 <s<1$, $q+r+s =1$). Since Alice's strategy is deterministic one, the three probability distributions $P(a|x, \lambda)$ $(\lambda = 0, 1, 2)$ must be equal to any three among $P_D^{00}$, $P_D^{01}$, $P_D^{10}$ and $P_D^{11}$. But any such combination will not satisfy the marginal probabilities for Alice given by Eq. (\ref{marginal}). So it is impossible to construct a DLHV-LHS decomposition of noisy BB84 box (for $V \leq \frac{1}{\sqrt{2}}$) with a hidden variable of dimension $3$.\\

Hence, one can conclude that it is impossible to construct a LHV-LHS decomposition of noisy BB84 box (for $V \leq \frac{1}{\sqrt{2}}$) with a hidden variable of dimension $3$ shared between Alice and Bob with deterministic/non-deterministic probability distributions at Alice's side and non-deterministic probability distributions (with quantum realization) at Bob's side.\\

Now, let us try to construct a LHV-LHS decomposition of noisy BB84 box which requires a hidden variable of dimension $2$. Since in $2 \times 2 \times 2$ steering-scenario hidden variables with dimnesions $d_{\lambda} \leq 4$ is sufficient for reproducing any unsteerable correlation, this can be realized in the following three possible ways:\\

i) One has to construct a DLHV-LHS decomposition of noisy BB84 box (for $V \leq \frac{1}{\sqrt{2}}$) with a hidden variable of dimension $4$ as in Eq.(\ref{bblhvlhs}). Then taking equal non-deterministic distributions at Bob's side as common and making the corresponding probability distributions at Alice's side non-deterministic can reduce the dimension of the hidden variable to $2$. However, as mentioned earlier, all the non-deterministic probability distributions at Bob's side $P(b|y, \rho_{\lambda})$ ($\lambda=0,1,2,3$) in the decomposition (\ref{bblhvlhs}) are unequal. In fact it can be easily checked that it is impossible to construct a DLHV-LHS decomposition of noisy BB84 box for the whole range $V \leq \frac{1}{\sqrt{2}}$ with a hidden variable of dimension $4$ with some/all non-deterministic probability distributions at Bob's side being equal to each other. Hence, the dimension of the hidden variable cannot be reduced from $4$ to $2$ in the DLHV-LHS decomposition of noisy BB84 box (for $V \leq \frac{1}{\sqrt{2}}$).\\

ii) One has to construct a DLHV-LHS decomposition of noisy BB84 box (for $V \leq \frac{1}{\sqrt{2}}$) with a hidden variable of dimension $3$. Then adapting the aforementioned procedure one can reduce the dimension of the hidden variable to $2$. However, it has already been shown that it is impossible to construct a DLHV-LHS decomposition of noisy BB84 box (for $V \leq \frac{1}{\sqrt{2}}$) with a hidden variable of dimension $3$.\\

iii) One has to construct a DLHV-LHS decomposition of noisy BB84 box (for $V \leq \frac{1}{\sqrt{2}}$) with a hidden variable of dimension $2$. In the following we will check such possibility.\\

In this case the noisy BB84 box can be decomposed in the following way:
\begin{equation}
P_{BB84}(ab|xy) = \sum_{\lambda=0}^{1} p(\lambda) P(a|x, \lambda) P(b|y, \rho_{\lambda}).
\end{equation}
Here, $p(0)=q$, $p(1)=r$ ($0 <q<1$, $0 <r<1$, $q+r =1$). Since Alice's strategy is a deterministic one, the two probability distributions $P(a|x, \lambda)$ $(\lambda = 0, 1)$ must be equal to any two among $P_D^{00}$, $P_D^{01}$, $P_D^{10}$ and $P_D^{11}$. In order to satisfy the marginal probabilities for Alice given by Eq. (\ref{marginal}), the only two possible choices of $P(a|x, 0)$ and $P(a|x, 1)$ are:\\
1) $P_D^{00}$ and $P_D^{01}$ with $q=r=\frac{1}{2}$\\
2) $P_D^{10}$ and $P_D^{11}$ with $q=r=\frac{1}{2}$.\\

\underline{1st Choice}\\
Now consider the first choice, i.e., $P(a|x, 0)$ =  $P_D^{00}$ and $P(a|x, 1)$ = $P_D^{01}$ (with $q=r=\frac{1}{2}$). Now in order to satisfy the 1st and 4th row given in Eq.(\ref{matrix form BB84}), the only possible choice for $P(b|y, \rho_{0})$ and $P(b|y, \rho_{1})$ are:
\begin{eqnarray} 
P(b|y,\rho_0) = \begin{tabular}{c|cc}
\backslashbox{(y)}{(b)} & (0) & (1) \\\hline
(0) & $\frac{1+V}{2}$ & $\frac{1-V}{2}$  \\
(1) & $\frac{1-V}{2}$ & $\frac{1+V}{2}$ \\
\end{tabular}  =  \langle \psi _0 | \{\Pi_{b|y}\}_{b,y} | \psi_0 \rangle
\end{eqnarray}
and
\begin{eqnarray}
P(b|y,\rho_1) = \begin{tabular}{c|cc}
\backslashbox{(y)}{(b)} & (0) & (1) \\\hline
(0) & $\frac{1-V}{2}$ & $\frac{1+V}{2}$  \\
(1) & $\frac{1+V}{2}$ & $\frac{1-V}{2}$ \\
\end{tabular}  =  \langle \psi _1 | \{\Pi_{b|y}\}_{b,y} | \psi_1 \rangle.
\end{eqnarray}
In this case, the marginal probabilities for Bob given by Eq.(\ref{marginalB}) are satisfied. But the joint probabilities given  in the 2nd and 3rd row of  Eq.(\ref{matrix form BB84}) are not satisfied.
In a similar way, it can be shown that, in case of the first choice, if one wants to satisfy  the 2nd and 3rd row in Eq.(\ref{matrix form BB84}), then the marginal probabilities for Bob given by Eq.(\ref{marginalB}) will be satisfied, but the 1st and 4th row in Eq.(\ref{matrix form BB84}) will not be satisfied.
In this way it can be shown that with the choice $P(a|x, 0)$ =  $P_D^{00}$ and $P(a|x, 1)$ = $P_D^{01}$, all joint probabilities cannot be satisfied simultaneously.

\underline{2nd Choice}\\
Now consider the second choice, i.e., $P(a|x, 0)$ =  $P_D^{10}$ and $P(a|x, 1)$ = $P_D^{11}$ (with $q=r=\frac{1}{2}$). In order to satisfy  the 1st and 4th row given in Eq.(\ref{matrix form BB84}), the only possible choice for $P(b|y, \rho_{0})$ and $P(b|y, \rho_{1})$ are:
\begin{eqnarray} 
P(b|y,\rho_0) = \begin{tabular}{c|cc}
\backslashbox{(y)}{(b)} & (0) & (1) \\\hline
(0) & $\frac{1+V}{2}$ & $\frac{1-V}{2}$  \\
(1) & $\frac{1+V}{2}$ & $\frac{1-V}{2}$ \\
\end{tabular}   =  \langle \psi _2 | \{\Pi_{b|y}\}_{b,y} | \psi_2 \rangle 
\end{eqnarray}
and
\begin{eqnarray}
P(b|y,\rho_1) = \begin{tabular}{c|cc}
\backslashbox{(y)}{(b)} & (0) & (1) \\\hline
(0) & $\frac{1-V}{2}$ & $\frac{1+V}{2}$  \\
(1) & $\frac{1-V}{2}$ & $\frac{1+V}{2}$ \\
\end{tabular}  =  \langle \psi _3 | \{\Pi_{b|y}\}_{b,y} | \psi_3 \rangle.
\end{eqnarray}
In this case, the marginal probabilities for Bob given by Eq.(\ref{marginalB}) are satisfied. But the joint probabilities given in the 2nd and 3rd row of Eq.(\ref{matrix form BB84}) are not satisfied.
In a similar way, it can be shown that, in case of the second choice, if one wants to satisfy the 2nd and 3rd row of Eq.(\ref{matrix form BB84}), then the marginal probabilities for Bob given by Eq.(\ref{marginalB}) will be satisfied. But the 1st and 4th row of Eq.(\ref{matrix form BB84}) will not be satisfied.
In this way it can be shown that with the choice $P(a|x, 0)$ =  $P_D^{10}$ and $P(a|x, 1)$ = $P_D^{11}$, all joint probabilities cannot be satisfied simultaneously.  It is, therefore, impossible to construct a DLHV-LHS decomposition of noisy BB84 box (for $V \leq \frac{1}{\sqrt{2}}$) with a hidden variable of dimension $2$.\\

Hence, we can conclude that it is impossible to have a LHV-LHS decomposition of noisy BB84 box (for $V \leq \frac{1}{\sqrt{2}}$) with a hidden variable of dimension $2$ shared between Alice and Bob having deterministic/non-deterministic probability distributions at Alice's side and non-deterministic probability distributions (with quantum realization) at Bob's side.\\
\end{proof}
The above theorem implies the following. 
\begin{cor}
 The unsteerable white-noise BB84 family demonstrates super-unsteerablity.
\end{cor}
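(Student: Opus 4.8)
The plan is to obtain the corollary as an immediate consequence of Definition~2 together with the Theorem just proved, the only real work being to assemble the correct dimensions. First I would pin down the quantum realization relative to which super-unsteerability is to be assessed. In the range $0<V\le\frac{1}{\sqrt{2}}$ the white noise-BB84 box is unsteerable (it does not violate the steering inequality~(\ref{chshst})), and it is reproduced, as shown above, by performing $A_0=-\sigma_z,\,A_1=\sigma_x$ on Alice's side and $B_0=\sigma_z,\,B_1=\sigma_x$ on Bob's side of the two-qubit Werner state $\rho_V\in\mathbb{C}^2\otimes\mathbb{C}^2$. Hence the local Hilbert-space dimension at the untrusted (steering) party is $d_A=2$.

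Next I would recall what Definition~2 demands: super-unsteerability holds for this realization if and only if the box admits no LHV-LHS decomposition with hidden-variable dimension $d_\lambda\le d_A=2$. Two ingredients already in hand settle this. On one hand, the explicit construction preceding the Theorem gives a valid LHV-LHS decomposition with $d_\lambda=4$ using two mutually unbiased qubit measurements at Bob, valid precisely for $V\le\frac{1}{\sqrt{2}}$; so the minimal compatible dimension is at most $4$. On the other hand, the Theorem shows that no LHV-LHS decomposition exists with $d_\lambda=3$ or $d_\lambda=2$. Together these fix the minimal hidden-variable dimension at exactly $4$.

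Finally I would conclude: since the minimal dimension $d_\lambda^{\min}=4$ strictly exceeds $d_A=2$, the requirement of Definition~2 is satisfied, and the unsteerable white noise-BB84 family demonstrates super-unsteerability for every $0<V\le\frac{1}{\sqrt{2}}$. I do not expect a substantive obstacle here, as the Theorem already carries the combinatorial burden; the one point demanding care is to emphasize that the dimension entering Definition~2 is that of the steering party $d_A$, and that the chosen Werner-state realization genuinely has $d_A=2$, so that the gap $d_\lambda^{\min}=4>2=d_A$ is exactly what certifies the quantumness of these unsteerable correlations.
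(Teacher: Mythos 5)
Your proposal is correct and follows essentially the same route as the paper: it combines the explicit four-term LHV-LHS decomposition (showing the minimal hidden-variable dimension is at most $4$) with the Theorem ruling out dimensions $3$ and $2$, and then compares the resulting minimum of $4$ against the local dimension $d_A=2$ of the two-qubit Werner-state realization to invoke Definition~2. The only difference is presentational — you make the role of $d_A$ and the exact value of the minimum more explicit than the paper does — but the logical content is identical.
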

\begin{proof}
We have shown that the unsteerable white noise-BB84 family can have LHV-LHS model with the minimum dimension of the hidden variable being $4$ for the whole range $ V \leq \frac{1}{\sqrt{2}}$. On the other hand, we have seen that this white noise-BB84 family can be simulated by using $2 \otimes 2$ quantum system (\ref{w}).  This is an instance of super-unsteerability since the minimum dimension of shared randomness needed for simulating the LHV-LHS model of unsteerable white-noise BB84 family is greater than the local Hilbert space dimension of the shared quantum system (reproducing unsteerable white-noise BB84 family) at the untrusted party's side (who steers the other party, in the present case Bob).
\end{proof}

As discussed before, in $2 \times 2 \times 2$ steering-scenario hidden variables with dimnesions $d_{\lambda} \leq 4$ is sufficient for reproducing any unsteerable correlation. Previously we have shown an example of super-unsteerability in the $2 \times 2 \times 2$ experimental scenario where the classical simulation protocol (with LHV-LHS model) requires hidden variables having minimum dimension $4$. Hence, there may be another form super-unsteerability where the classical simulation protocol (with LHV-LHS model) requires minimum hidden variable dimension $3$. In the following subsection we are going to present an example of it.\\

\subsection{Super-unsteerability: Example 2}

Consider that the two spatially separated parties (say, Alice and Bob) share the following separable two-qubit state,
\begin{equation}
\label{state2n}
\rho = \frac{1}{2} \Big( |00\rangle \langle 00| + |++ \rangle \langle ++| \Big) ,
\end{equation}
where, $|0\rangle$ and $|+\rangle$ are the eigenstates of the operators $\sigma_z$ and $\sigma_x$, respectively, correspondng to the eigenvalue $+1$. The above state has nonzero  quantum discord  from both
Alice to Bob and Bob to Alice since it  is neither a classical-quantum state nor a quantum-classical state \cite{disc, disc2, disc3, disc4}. If Alice performs the projective measurements of observables corresponding to the operators $A_0 = \sigma_x$ and $A_1 = \sigma_z$, and Bob performs projective measurements of observables corresponding to the operators $B_0 =  \sigma_x$ and $B_1 = \sigma_z$, then the following correlation is produced from the above quantum-quantum state,

 \begin{equation}      
 \label{corr2}
 P(ab|xy) =\begin{tabular}{c|cccc}
 \backslashbox{(x,y)}{(a,b)} & (0,0) & (0,1) & (1,0) & (1,1)\\\hline\\[0.05cm]
(0,0) & $\dfrac{5}{8}$ & $\dfrac{1}{8}$ & $\dfrac{1}{8}$ & $\dfrac{1}{8}$ \\[0.5cm]
(0,1) & $\dfrac{1}{2}$ & $\dfrac{1}{4}$ & $\dfrac{1}{4}$ & $0$ \\[0.5cm]
(1,0) & $\dfrac{1}{2}$ & $\dfrac{1}{4}$ & $\dfrac{1}{4}$ & $0$ \\[0.5cm]
(1,1) & $\dfrac{5}{8}$ & $\dfrac{1}{8}$ & $\dfrac{1}{8}$ & $\dfrac{1}{8}$ \\
\end{tabular}
\end{equation}

Here $x$, $y$ denote the input variables on Alice's and Bob's sides respectively; and $a$, $b$ denote the outputs on Alice's and Bob's sides respectively. The above box  does not violate the analogous CHSH inequality for steering (\ref{chshst}). Hence, the box (\ref{corr2}) is unsteerable in the scenario where Alice performs black-box (uncharacterized) measurements and Bob performs two mutually unbiased qubit measurements. In the following, we demonstrate that the box (\ref{corr2}) detects super-unsteerability 
of the quantum-quantum state (\ref{state2n}).

\subsubsection*{Simulating the correlation given by Eq.(\ref{corr2}) with LHV at one side and LHS at another side}

The  correlation given by Eq.(\ref{corr2}) can be written as
\begin{equation}
P(ab|xy) = \frac{1}{8}\Big( 2 P_D^{0000} + P_D^{0010} + P_D^{0011} + P_D^{1000} + P_D^{1100} + P_D^{1010} + P_D^{1111} \Big) \nonumber
\end{equation}
\begin{equation}
= \frac{1}{2} P_D^{00} \Big( \frac{2 P_D^{00} + P_D^{10} + P_D^{11}}{4} \Big) \nonumber
\end{equation}
\begin{equation}
+ \frac{1}{4} P_D^{10} \Big( \frac{P_D^{00} + P_D^{10}}{2} \Big) + \frac{1}{4} P_D^{11} \Big( \frac{P_D^{00} + P_D^{11}}{2} \Big) \nonumber
\end{equation}
\begin{equation}
= \sum_{\lambda=0}^{2} p(\lambda) P(a|x, \lambda) P(b|y, \rho_{\lambda}),
\label{eee}
\end{equation}
where $p(0)$ = $\frac{1}{2}$, $p(1)$ = $p(2)$ = $\frac{1}{4}$;\\
$P(a|x,0)$ = $P_D^{00}$, $P(a|x,1)$ = $P_D^{10}$, $P(a|x,2)$ = $P_D^{11}$;\\
and
\begin{align}      
P(b|y,\rho_0) \!=\!  \frac{2 P_D^{00} + P_D^{10} + P_D^{11}}{4} \! &=\! \begin{tabular}{c|cc}
 \backslashbox{(y)}{(b)} & (0) & (1) \\\hline
(0) & $\frac{3}{4}$ & $\frac{1}{4}$  \\
(1) & $\frac{3}{4}$ & $\frac{1}{4}$  \\
\end{tabular} \nonumber\\
&=\!  \langle \psi^{'}_0 | \{\Pi_{b|y}\}_{b,y} | \psi^{'}_0 \rangle,
\end{align}
\begin{align}      
P(b|y,\rho_1) \!=\!  \frac{P_D^{00} + P_D^{10}}{2} \! &=\! \begin{tabular}{c|cc}
 \backslashbox{(y)}{(b)} & (0) & (1) \\\hline
(0) & $1$ & $0$  \\
(1) & $\frac{1}{2}$ & $\frac{1}{2}$  \\
\end{tabular} \nonumber\\
&=\!  \langle \psi^{'}_1 | \{\Pi_{b|y}\}_{b,y} | \psi^{'}_1 \rangle,
\end{align}
\begin{align}      
P(b|y,\rho_2) \!=\!  \frac{P_D^{00} + P_D^{11}}{2} \! &=\! \begin{tabular}{c|cc}
 \backslashbox{(y)}{(b)} & (0) & (1) \\\hline
(0) & $\frac{1}{2}$ & $\frac{1}{2}$  \\
(1) & $1$ & $0$  \\
\end{tabular} \nonumber\\
&=\!  \langle \psi^{'}_2 | \{\Pi_{b|y}\}_{b,y} | \psi^{'}_2 \rangle,
\end{align}
 where $\{\Pi_{b|y}\}_{b,y}$ corresponds to two arbitrary projective mutually unbiased measurements in the Hilbert space $\mathcal{C}^2$ corresponding to the operators $B_0 = |\uparrow_0 \rangle \langle \uparrow_0|$ $-$ $|\downarrow_0 \rangle \langle \downarrow_0|$ and $B_1 =  |\uparrow_1 \rangle \langle \uparrow_1|$ $-$ $|\downarrow_1 \rangle \langle \downarrow_1|$ as described earlier. The $|\psi^{'}_{\lambda}\rangle$s that produce
$p(b|y,\rho_{\lambda})$s given above are given by  
\begin{equation}
|\psi^{'}_0 \rangle = \frac{\sqrt{3}}{2} |\uparrow_0 \rangle + e^{i \phi^{'}_0} \frac{1}{2} |\downarrow_0 \rangle,
\end{equation}
where $cos \phi^{'}_0 = \frac{1}{\sqrt{3}}$,
\begin{equation}
|\psi^{'}_1 \rangle =  |\uparrow_0 \rangle
\end{equation}
and
\begin{equation}
|\psi^{'}_2 \rangle = \frac{1}{\sqrt{2}} |\uparrow_0 \rangle +  \frac{1}{\sqrt{2}} |\downarrow_0 \rangle,
\end{equation}
which are all valid states in the Hilbert space $\mathcal{C}^2$.\\

Hence, the LHV-LHS decomposition of  the correlation given by Eq.(\ref{corr2}) can be realized with hidden variable having dimension $3$ (with two arbitrary projective mutually unbiased measurements at trusted party).

\begin{thm}
The LHV-LHS decomposition of  the correlation given by Eq.(\ref{corr2}) cannot be realized with hidden variable having dimension $2$.
\end{thm}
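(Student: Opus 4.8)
The plan is to follow the same case analysis used for Theorem~1 and exclude a dimension-$2$ LHV-LHS model by exhausting the three ways it could arise: a direct dimension-$2$ decomposition with deterministic Alice distributions (a ``DLHV-LHS''), or a reduction to dimension $2$ from a DLHV-LHS of dimension $3$ (such as Eq.~(\ref{eee})) or of dimension $4$ by amalgamating coincident Bob distributions. As a preparatory step I would read off from Eq.~(\ref{corr2}) the single-party marginals $p(a|x)=(3/4,1/4)$ for each $x$ and $p(b|y)=(3/4,1/4)$ for each $y$, and observe that, since $P(01|xy)$, $P(10|xy)$ and $P(11|xy)$ are each fixed by $P(00|xy)$ together with these marginals, the whole task collapses to reproducing the four numbers $P(00|xy)=5/8,\,1/2,\,1/2,\,5/8$.

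The direct dimension-$2$ DLHV-LHS case is the clean one. Alice's two response functions are deterministic, hence drawn from $\{P_D^{00},P_D^{01},P_D^{10},P_D^{11}\}$, and imposing $p(a=0|x)=3/4$ at both inputs singles out the pair $\{P_D^{00},P_D^{01}\}$ with weights $\{3/4,1/4\}$ as the only admissible choice (for every other pair the required marginal cannot be met at both inputs simultaneously). But $P_D^{00}$ always outputs $a=0$ and $P_D^{01}$ always outputs $a=1$, so the slice $P(a=0,b\,|\,x,y)=\frac{3}{4}\,P(b|y,\rho_0)$ is manifestly independent of $x$, contradicting $P(00|00)=5/8\neq 1/2=P(00|10)$. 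This rules out the direct construction.

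The reduction routes are where I expect the real difficulty to lie. After imposing Alice's marginal, the admissible deterministic weightings on $\{P_D^{00},P_D^{01},P_D^{10},P_D^{11}\}$ form a one-parameter family (with $p_{10}=p_{11}$, $p_{00}=3/4-p_{10}$ and $p_{01}=1/4-p_{10}$), and for each member the quantum distributions $P(b|y,\rho_\lambda)$ are pinned down by the linear consistency conditions coming from the eight joint-probability entries. To close the argument one must show that in none of these dimension-$3$ or dimension-$4$ DLHV-LHS decompositions can two or more of the $P(b|y,\rho_\lambda)$ be made to coincide, since merging equal Bob distributions (and convexly combining the attached Alice strategies) is exactly what would lower the dimension to $2$. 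I regard this as the decisive and delicate step: a \emph{single} qubit hidden state can in principle be paired with two distinct deterministic Alice strategies, so the exclusion must genuinely invoke the quantum (Bloch-disk) realizability of the $P(b|y,\rho_\lambda)$ rather than mere probabilistic consistency, and verifying that the four Bob points are forced to be pairwise distinct — rather than collapsing onto two Bloch-admissible points — is the crux on which the statement stands or falls.

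Finally, I would make explicit which reading of the model is intended, because the obstacle above is precisely the gap between them: the unsteerable-assemblage definition in Sec.~II takes Alice's responses $D(a|x,\chi)$ to be deterministic, whereas Definition~2 permits arbitrary $p(a|x,\lambda)$. Under the former reading the reduction routes, which necessarily render an Alice distribution non-deterministic, do not contribute and the direct-decomposition argument already settles the claim; under the latter reading the coincidence analysis above cannot be bypassed, and I would carry it out in full before committing to the proof.
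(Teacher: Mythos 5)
Your overall strategy is the same three-route case analysis the paper uses, and your treatment of the direct dimension-$2$ case is complete and in fact more explicit than the paper's own (the paper merely asserts ``it can be easily checked''): isolating $\{P_D^{00},P_D^{01}\}$ with weights $3/4,1/4$ via Alice's marginal and then observing that the resulting $P(a=0,b|xy)=\tfrac{3}{4}P(b|y,\rho_0)$ cannot depend on $x$, contradicting $P(00|00)=5/8\neq 1/2=P(00|10)$, is a clean closure of that branch.

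The genuine gap is that the two reduction routes --- which you yourself identify as ``the crux on which the statement stands or falls'' --- are described but never executed; a proof that ends with ``I would carry it out in full before committing'' has not yet proved the theorem. The tool you are missing is the pair of zero entries $p(11|01)=p(11|10)=0$ in Eq.~(\ref{corr2}), which you never exploit. The paper's Appendix uses them to kill the dimension-$4$ route outright: in any putative DLHV-LHS decomposition with all four deterministic Alice boxes present, those zeros force $p(1)(1-t_3)=p(1)(1-t_4)=0$, i.e.\ either the weight of $P_D^{01}$ vanishes (collapsing to dimension $3$) or Bob's distribution $P(b|y,\rho_1)$ is deterministic for \emph{both} mutually unbiased measurements, which no qubit state realizes. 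The same zeros settle the dimension-$3$ route: your own marginal constraint $p_{00}=3/4-p_{10}$, $p_{01}=1/4-p_{10}$, $p_{10}=p_{11}$ shows the only admissible three-term deterministic weighting is $p_{01}=0$, $p_{00}=1/2$, $p_{10}=p_{11}=1/4$, i.e.\ exactly Eq.~(\ref{eee}); the zeros then pin $P(b|y,\rho_1)$ to be deterministic at $y=0$ and $P(b|y,\rho_2)$ deterministic at $y=1$, and since a qubit state can be deterministic for at most one of two mutually unbiased measurements these two distributions cannot coincide (nor can either coincide with $P(b|y,\rho_0)$, which is deterministic for neither), so no merging to dimension $2$ is possible. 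Your closing worry about deterministic versus arbitrary $p(a|x,\lambda)$ is resolved by Definition~2, which allows arbitrary Alice distributions, so the reduction routes cannot be waved away; without the argument just sketched your proposal establishes only one of the three cases.
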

\begin{proof}
Let us try to construct a LHV-LHS decomposition of the correlation given by Eq.(\ref{corr2}) which requires a hidden variable of dimension $2$.  Since in $2 \times 2 \times 2$ steering-scenario hidden variables with dimnesions $d_{\lambda} \leq 4$ is sufficient for reproducing any unsteerable correlation, this can be realized in the following three possible ways:\\

i) One has to construct a DLHV-LHS decomposition of the correlation given by Eq.(\ref{corr2}) with a hidden variable of dimension $4$. Then taking equal non-deterministic distributions at Bob's side as common and making the corresponding probability distributions at Alice's side non-deterministic can reduce the dimension of the hidden variable to $2$. However, it can be shown that it is impossible to to construct a DLHV-LHS decomposition of the correlation given by Eq.(\ref{corr2}) with a hidden variable of dimension $4$ (for detailed calculations, see the Appendix).\\

ii) One has to construct a DLHV-LHS decomposition of the correlation given by Eq.(\ref{corr2}) with a hidden variable of dimension $3$ as in Eq.(\ref{eee}). Then by taking equal non-deterministic distributions at Bob's side as common and making the corresponding probability distributions at Alice's side non-deterministic one can reduce the dimension of the hidden variable to $2$. However, all the non-deterministic probability distributions at Bob's side $P(b|y, \rho_{\lambda})$ ($\lambda=0,1,2,3$) in the decomposition (\ref{eee}) are unequal. In fact it can be easily checked that it is impossible to construct a DLHV-LHS decomposition of the correlation (\ref{corr2}) with a hidden variable of dimension $3$ with some/all non-deterministic probability distributions at Bob's side being equal to each other. Hence, the dimension of the hidden variable cannot be reduced from $3$ to $2$ in the DLHV-LHS decomposition of the correlation (\ref{corr2}).\\

iii) One has to construct a DLHV-LHS decomposition of the correlation (\ref{corr2}) with a hidden variable of dimension $2$. In the following we will check such possibility.\\

 In this case the correlation given by Eq.(\ref{corr2}) can be decomposed in the following way:
\begin{equation}
P(ab|xy) = \sum_{\lambda=0}^{1} p(\lambda) P(a|x, \lambda) P(b|y, \rho_{\lambda}).
\end{equation}
Here, $p(0)=q$, $p(1)=r$ ($0 <q<1$, $0 <r<1$, $q+r =1$). Since Alice's strategy is a deterministic one, the two probability distributions $P(a|x, \lambda)$ $(\lambda = 0, 1)$ must be equal to any two among $P_D^{00}$, $P_D^{01}$, $P_D^{10}$ and $P_D^{11}$. But it can be easily checked that none of these choices will satisfy all the joint probability distributions mentioned in Eq.(\ref{corr2}) simultaneously.
It is, therefore, impossible to to construct a DLHV-LHS decomposition of the correlation (\ref{corr2}) with a hidden variable of dimension $2$.\\

Hence, we can conclude that it is impossible to have a LHV-LHS decomposition of the correlation (\ref{corr2}) with a hidden variable of dimension $2$ shared between Alice and Bob having deterministic/non-deterministic probability distributions at Alice's side and non-deterministic probability distributions (with quantum realization) at Bob's side.\\
\end{proof}
The above theorem implies the following. 
\begin{cor}
The correlation given by Eq.(\ref{corr2}) demonstrates super-unsteerablity. 
\end{cor}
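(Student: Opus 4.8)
The plan is to mirror the argument used for Theorem~1 and rule out a hidden variable of dimension $2$ by reducing every candidate model to a canonical form. Recall from the construction preceding Eq.~(\ref{eee}) that any LHV-LHS model of a given unsteerable box can be brought to a \emph{DLHV-LHS} form in which Alice's conditional distributions $P(a|x,\lambda)$ are \emph{distinct} local-deterministic boxes from $\{P_D^{00},P_D^{01},P_D^{10},P_D^{11}\}$ and each $P(b|y,\rho_\lambda)$ is a (possibly non-deterministic) \emph{quantum-realizable} distribution; the hidden-variable dimension of the final model equals the number of Alice strategies surviving after one merges any columns carrying identical Bob distributions. Since dimension $4$ always suffices in the $2\times2\times2$ scenario, a dimension-$2$ model can arise in only three ways: directly as a dimension-$2$ DLHV-LHS, or by merging a dimension-$3$ or dimension-$4$ DLHV-LHS down to two columns. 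I would rule out all three.

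The two easier routes come first. For a \emph{direct} dimension-$2$ DLHV-LHS, Alice's two strategies form an unordered pair drawn from $\{P_D^{00},P_D^{01},P_D^{10},P_D^{11}\}$. Reading the marginals off Eq.~(\ref{corr2}) gives $p(a{=}0|x)=\tfrac34$ for both $x=0,1$; a short check of the $\binom{4}{2}=6$ pairs shows that only $\{P_D^{00},P_D^{01}\}$ with weights $(\tfrac34,\tfrac14)$ can reproduce these values, every other pair forcing a marginal in $\{0,\tfrac12,1\}$. But $P_D^{00}$ and $P_D^{01}$ are both constant in $x$, so this pair yields a box that is independent of $x$, whereas Eq.~(\ref{corr2}) is not ($y=1$ rows differ for $x=0$ and $x=1$); hence the direct route fails. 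For the dimension-$3$ route I would note that the same marginal constraints single out the triple $\{P_D^{00},P_D^{10},P_D^{11}\}$ with weights $(\tfrac12,\tfrac14,\tfrac14)$ as the unique admissible one, which is precisely the model of Eq.~(\ref{eee}); since its three Bob distributions are pairwise distinct, no column-merging to dimension $2$ is possible.

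The dimension-$4$ route is the crux and the step I expect to be hardest, since naively all four Alice strategies supply extra freedom. Here I would set up the full linear system $p(ab|xy)=\sum_\lambda w_\lambda P_D^{(\lambda)}(a|x)\,P(b|y,\rho_\lambda)$ with all four weights strictly positive; the marginal constraints then force $w_{10}=w_{11}=t$, $w_{00}=\tfrac34-t$, $w_{01}=\tfrac14-t$ with $0<t<\tfrac14$. The decisive input is the pair of \emph{vanishing} entries $p(11|10)=p(11|01)=0$ in Eq.~(\ref{corr2}): as each summand is nonnegative and the relevant weights are positive, these zeros force the Bob distribution attached to the constant strategy $P_D^{01}$ (which always outputs $a=1$) to give the sharp outcome $b=0$ with certainty at \emph{both} settings, $P(b{=}0|y{=}0,\rho_{01})=P(b{=}0|y{=}1,\rho_{01})=1$. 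Since $B_0$ and $B_1$ are mutually unbiased, no pure qubit state can yield a deterministic outcome for both; equivalently, the qubit realizability condition $(2P(b{=}0|0)-1)^2+(2P(b{=}0|1)-1)^2\le 1$ is violated. Thus no quantum-realizable dimension-$4$ DLHV-LHS exists. Combining the three impossibilities shows the minimal hidden-variable dimension is $3$, establishing the theorem; the detailed bookkeeping of the dimension-$4$ linear system is what I would relegate to the Appendix, and the mutual-unbiasedness obstruction is the conceptual heart of the argument.
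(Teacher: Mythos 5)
Your proposal is correct and follows essentially the same route as the paper: the paper proves the corollary by combining its Theorem~2 (impossibility of a dimension-$2$ LHV-LHS model, argued via exactly your three cases of a direct dimension-$2$ DLHV-LHS, merging from dimension $3$, and merging from dimension $4$) with the explicit dimension-$3$ model of Eq.~(\ref{eee}) and the $2\otimes2$ quantum realization from the state (\ref{state2n}); your dimension-$4$ obstruction from the vanishing entries $p(11|01)=p(11|10)=0$ forcing a doubly deterministic Bob distribution incompatible with mutually unbiased qubit measurements is precisely the argument of the paper's Appendix~A. You merely make explicit a few steps the paper leaves as ``easily checked'' (the marginal-based enumeration of admissible Alice strategy pairs and triples), which is a refinement rather than a different approach.
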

\begin{proof}
We have shown that the unsteerable correlation given by Eq.(\ref{corr2}) can have a LHV-LHS model
with the minimum dimension of the hidden variable being $3$. On the other hand, we have seen that the unsteerable correlation given by Eq.(\ref{corr2}) can be simulated by using $2 \otimes 2$ quantum system (\ref{state2n}). This is an instance of super-unsteerability since the minimum dimension of shared randomness needed to simulate the LHV-LHS model of the correlation (\ref{corr2}) is greater than the local Hilbert space dimension of the shared quantum system (reproducing the given unsteerable correlation) at the untrusted party's side (who steers the other party, in the present case Bob).
\end{proof}


\section{Quantumness as captured by super-unsteerability}
Here we argue that the unsteerable boxes (\ref{bb84}) and (\ref{corr2}) in 
the given steering scenario, where the dimension of the steering party is bounded, have nonclassicality beyond steering which 
can be operationally identified with super-unsteerability.\\

 In the Example $1$ of super-unsteerability, we have shown that the unsteerable BB84 box (for $ V \leq \frac{1}{\sqrt{2}}$) given by Eq.(\ref{bb84}) can be simulated by LHV-LHS model with random variable having minimum dimension $4$, where each LHS is a $2$ dimensional quantum system. In another words,  the unsteerable BB84 box (for $ V \leq \frac{1}{\sqrt{2}}$) given by Eq.(\ref{bb84}) cannot be reproduced by classical-quantum state with dimension $d \otimes 2$ where $d < 4$. Hence,  the super-unsteerable BB84 box (for $ V \leq \frac{1}{\sqrt{2}}$) cerifies quantumness of $d \otimes 2$  dimensional resources producing it, where $d <4$. For example, the super-unsteerable BB84 box certifies the quantumness of the $2 \otimes 2$ states given by Eq.(\ref{w}) for $ V \leq \frac{1}{\sqrt{2}}$.\\
 
Consider that Alice and Bob share the following qutrit-qubit state:
\begin{equation}
\label{erasure}
\rho^{'}_V = V |\psi^- \rangle \langle \psi^- | + \frac{1- V}{2} |2 \rangle \langle 2 | \otimes \mathbb{I}_2,
\end{equation}
where $|\psi^- \rangle$ = $\frac{1}{\sqrt{2}} (|01 \rangle - |10 \rangle)$ is the singlet state, $\mathbb{I}_2$ is the identity in the $|0\rangle$, $|1\rangle$ qubit subspace and $0 < V \leq 1$. This state is known as the ``Erasure state'', as it can be obtained by sending Alice's qubit of a bipartite singlet state through an erasure channel; with probability $V$ the singlet state remains the same, and with $(1-V)$ probability Alice's qubit is replaced by the state $|2 \rangle \langle 2 |$ (orthogonal to the qubit subspace). This state is entangled for any non-zero $V$ which can be checked
through the positive-partial-transpose criterion \cite{ppt}. Therefore, the Erasure state $\rho^{'}_V$ has nonzero quantumness (as quantified by quantum discord \cite{disc, disc2, disc3, disc4}) for any $V>0$.
If Alice and Bob perform appropriate measurement on the Erasure state, the white noise-BB84 family can also  be reproduced (for detailed calculations, see the Appendix). Hence, the super-unsteerable BB84 box certifies the quantumness of the $3 \otimes 2$ states given by Eq.(\ref{erasure}) for $ V \leq \frac{1}{\sqrt{2}}$.\\

Thus, super-unsteerability provides operational characterization of quantumness of the unsteerable (in the given steering scenario) $2 \otimes 2$ state (\ref{w}) and $3 \otimes 2$  state (\ref{erasure}) for $V \leq 1/\sqrt{2}$ if the local Hilbert-space dimension of the steering party is bounded.\\

In Example 2 we have considered a  $2 \otimes 2$ separable mixed state (\ref{state2n}) having
both nonzero Alice to Bob and non-zero Bob to Alice discord.
We have shown that the correlation (\ref{corr2}) produced by performing some particular local noncommuting measurements on this state can be simulated with LHV-LHS model with random variable having minimum dimension $3$. This implies that the correlation (\ref{corr2}) cannot be simulated by classical-quantum state with dimension $2 \otimes 2$. Hence, the super-unsteerable correlation (\ref{corr2}) certifies quantumness of certain $2 \otimes 2$ dimensional resources producing it and provides operational characterization of quantumness of the state (\ref{state2n}).\\

Consider  the classical-quantum or  quantum-classical  states \cite{bc} given by,
\begin{equation}
\rho_{CQ} = \sum_{i=0}^{1} p_i |i\rangle\langle i| \otimes \chi_i
\label{eq:cq}
\end{equation}
and
\begin{equation}
\label{cq:eq}
\rho_{QC} = \sum_{j=0}^{1} p_{j}  \phi_j  \otimes | j \rangle \langle j|,
\end{equation}
where  $ \{  |  i  \rangle \}$  and  $  \{ |  j  \rangle  \}$ are  the
orthonormal sets, and, $\chi_i$ and $\phi_j$ are the arbitrary quantum
states.    These correlations  can  manifestly  be simulated  by
  presharing randomness of dimension  2.  Since Eqs. (\ref{eq:cq}) and
  (\ref{cq:eq}) represent a  family of states that  are not super-correlated,
  they  cannot be used to demonstrate super-unsteerability.  The zero-discord
states  having  classical-classical   correlations  (corresponding  to
orthogonal $\chi_i$ and $\phi_j$ in Eqs.(\ref{eq:cq}, \ref{cq:eq}) respectively) are also not
super-unsteerable.     Therefore   one    can   conclude    that   the
super-unsteerable   states   form   a    subset   of   states   having
\textit{quantum-quantum} correlations,  and thus form a  strict subset
of the discordant states. Further, in the $2 \times 2 \times 2$ experimental scenario super-unsteerability classifies any bipartite states having quantum-quantum correlations into three types:\\

(i) quantum-quantum states which demonstrate super-unsteerability with unsteerable boxes having 
shared randomness of minimum dimension $4$.\\

(ii) quantum-quantum states which demonstrate super-unsteerability with unsteerable boxes having 
shared randomness of minimum dimension $3$ and \\

(iii) quantum-quantum states which do not demonstrate super-unsteerability.

\section{Inequivalence between superlocality and super-unsteerability}
 
 Superlocality  \cite{sl1, sl2, sl3, sl4, sl5, sl6} of bipartite quantum correlations, which do not violate a Bell inequality, refers to the higher dimensionality of shared randomness needed to reproduce them in the classical simulation scenarios compared to that of the quantum systems producing the correlations. In this classical simulation scenario the local hidden variables are used by both the parties to generate the shared randomness.  On the other hand, super-unsteerability of bipartite quantum correlations, which do not violate a steering inequality, refers to the higher dimensionality of shared randomness needed to reproduce them compared to the local Hilbert space dimension of the quantum system (reproducing the correlation) at the untrusted party's side. This notion is defined in the classical simulation scenarios where local hidden variables are used by one of the parties and the other party uses local hidden states. Thus, the classical simulation scenarios in which superlocality and super-unsteerability  defined are completely inequivalent; the former corresponds to the 
black-box models on both the sides as it corresponds to the Bell scenario, whereas the latter corresponds to the black-box model on one side and the quantum model on the other side as it corresponds to the steering scenario. \\

For instance, the BB84 family is superlocal  for $0 < V \leq 1$ (see Proposition $3$ in Ref. \cite{sl1} where superlocality of the BB84 family with $V=1$ was shown) and  detects   steerability for $V>1/\sqrt{2}$.  On the other hand, it detects super-unsteerability for $0 < V \leq \frac{1}{\sqrt{2}}$. Hence, these two different regions where the BB84 family is superlocal and super-unsteerable demonstrate the inequivalence between superlocality and super-unsteerability.

 \section{Discussion and Conclusion}

In this work we have introduced the notion of super-unsteerability by showing that there are certain unsteerable  correlations whose simulation with LHV-LHS model requires preshared randomness with dimension higher than the local Hilbert space dimension of the quantum system (reproducing the given unsteerable correlation) at the untrusted party's side. Two examples of super-unsteerability has been presented. These two examples are inequivalent from each other with respect to minimum dimension of the shared randomness required for simulating the correlations with LHV-LHS models.  Note that in the present study we have restricted ourselves to the $2 \times 2 \times 2$ experimental scenario ($2$ parties, $2$ measurement settings per party, $2$ outcomes per measurement setting), and in this scenario shared randomness with dimension $d_{\lambda} \leq 4$ is sufficient to simulate any local as well as unsteerable correlation \cite{sl1}. Hence, there are two possible classes of super-unsteerability in the $2 \times 2 \times 2$ experimental scenario: one with shared randomness with minimum dimension of $4$ and the other with shared randomness with minimum dimension of $3$. We have presented examples of both this possible two classes of super-unsteerability in the $2 \times 2 \times 2$ experimental scenario.
Further, our study of simulating unsteerable boxes with minimum dimension of the shared randomness in the $2 \times 2 \times 2$ experimental scenario classifies any bipartite states into 
three types: (i) States which do not demonstrate super-unsteerability. The classical-quantum and quantum-classical states belong to this class.
(ii) quantum-quantum states which demonstrate super-unsteerability with unsteerable boxes having minimum hidden variable dimension $3$,
and (iii) quantum-quantum states which demonstrate super-unsteerability with unsteerable boxes having minimum hidden variable dimension $4$. The present study also provides a efficient procedure to minimize the dimension of the shared randomness needed to construct the LHV-LHS model of an unsteerable correlation.\\

 In   Ref.   \cite{sl5},  the   authors  have  shown   that  the
  nonclassicality of a  family of local correlations  in the Bell-CHSH
  scenario  can be characterized by  super-correlation, in this
  case, superlocality.   Extending this  approach, we have shown here
  that the nonclassicality of certain unsteerable states in the related steering  scenario can also
  be pointed out by  super-correlations,
  i.e., the phenomenon  of super-unsteerability.  \\

Before concluding, we note that nonlocality or  steerability of any  correlation in QM or in any convex operational theory can be characterized by the non-zero communication cost that must be supplemented with pre-shared randomness in order to simulate the correlations. The question of an analogous operational characterization of quantumness of unsteerable correlations has been addressed here, and associated with super-unsteerability. The idea of super-unsteerability in the context of multipartite unsteerable boxes \cite{munbox, munbox2} would be worth probing for future studies. It would be interesting to study how to quantify super-unsteerability and whether there exists any quantum informational application of the quantumness of unsteerable correlations as witnessed by super-unsteerability.

\section{ACKNOWLEDGEMENTS}
DD acknowledges the financial support from University Grants Commission (UGC), Government of India.
BB, CJ and ASM acknowledge the financial support from  project SR/S2/LOP-08/2013 of the Department of Science and Technology (DST), government
of India.

\appendix
\section{Demontrating that the correlation given by Eq.(\ref{corr2}) cannot have a DLHV-LHS decomposition with a hidden variable of dimension $4$}	

If the correlation given by Eq.(\ref{corr2}) has a DLHV-LHS decomposition with a hidden variable of dimension $4$, i. e., a LHV-LHS decomposition with a hidden variable of dimension $4$ having different deterministic probability distributions at Alice's side and non-deterministic probability distributions (with quantum realization) at Bob's side, then the correlation (\ref{corr2}) can be written as follows,
\begin{equation}
\label{a1}
P_{BB84}(ab|xy) = \sum_{\lambda=0}^{3} p(\lambda) P(a|x, \lambda) P(b|y, \rho_{\lambda}),
\end{equation}
where, $0<p(\lambda) <1$ ($\lambda = 0, 1, 2, 3$), $P(a|x,0) = P_D^{00}$, $P(a|x,1) = P_D^{01}$, $P(a|x,2) = P_D^{10}$, $P(a|x,3) = P_D^{11}$. Let us assume that the non-deterministic probability distributions (with quantum realization) at Bob's side are given by,
\begin{equation}      
P(b|y,\rho_0) =\begin{tabular}{c|cc}
 \backslashbox{(y)}{(b)} & (0) & (1) \\\hline
(0) & $t_1$ & $1-t_1$  \\
(1) & $t_2$ & $1-t_2$  \\
\end{tabular}
\end{equation}
\begin{equation}      
P(b|y,\rho_1) =\begin{tabular}{c|cc}
 \backslashbox{(y)}{(b)} & (0) & (1) \\\hline
(0) & $t_3$ & $1-t_3$  \\
(1) & $t_4$ & $1-t_4$  \\
\end{tabular}
\end{equation}
\begin{equation}      
P(b|y,\rho_2) =\begin{tabular}{c|cc}
 \backslashbox{(y)}{(b)} & (0) & (1) \\\hline
(0) & $t_5$ & $1-t_5$  \\
(1) & $t_6$ & $1-t_6$  \\
\end{tabular}
\end{equation}
\begin{equation}      
P(b|y,\rho_3) =\begin{tabular}{c|cc}
 \backslashbox{(y)}{(b)} & (0) & (1) \\\hline
(0) & $t_7$ & $1-t_7$  \\
(1) & $t_8$ & $1-t_8$  \\
\end{tabular}.
\end{equation}
with $0 < t_i < 1$ ($i=1,2,3,4,5,6,7,8$). Comparing Eqs.(\ref{a1}) and (\ref{corr2}), we get,
\begin{equation}
\label{a2}
p(11|01) = p(1)(1-t_4) + p(3)(1-t_8) =0
\end{equation}
and
\begin{equation}
\label{a3}
p(11|10) = p(1)(1-t_3) + p(2)(1-t_5) =0.
\end{equation}
Since $0<p(\lambda) <1$ ($\lambda = 0, 1, 2, 3$) and $0 < t_i < 1$ ($i=1,2,3,4,5,6,7,8$), from Eqs.(\ref{a2}) and (\ref{a3}), we get,
\begin{equation}
\label{a4}
p(1)(1-t_4) = p(1)(1-t_3) = 0.
\end{equation}
Hence, we get either,
\begin{equation}
\label{a5}
p(1)=0
\end{equation}
or,
\begin{equation}
t_4 =1 \hspace{0.4cm} \text{and} \hspace{0.4cm} t_3 =1.
\end{equation}
Now, if $p(1) =0$, then the decomposition (\ref{a1}) becomes DLHV-LHS decomposition of the correlation (\ref{corr2}) with a hidden variable of dimension $3$. On the other hand, if $t_4 =1$ and $t_3 =1$, then $P(b|y, \rho_1)$ becomes
\begin{equation}      
P(b|y,\rho_1) =\begin{tabular}{c|cc}
 \backslashbox{(y)}{(b)} & (0) & (1) \\\hline
(0) & $1$ & $0$  \\
(1) & $1$ & $0$  \\
\end{tabular},
\end{equation}
which has no quantum realisation, i. e., $P(b|y, \rho_1) \neq \langle \phi | \{\Pi_{b|y}\}_{b,y}|\phi\rangle$ for any quantum state $|\phi\rangle$.\\

Hence, one can conclude that the correlation given by Eq.(\ref{corr2}) cannot have a DLHV-LHS decomposition with a hidden variable of dimension $4$.

\section{Reproducing white noise-BB84 box using qutrit-qubit system}	
Consider Alice and Bob share the following qutrit-qubit state:
\begin{equation}
\rho_E = V |\psi^- \rangle \langle \psi^- | + \frac{1- V}{2} |2 \rangle \langle 2 | \otimes \mathbb{I}_2,
\end{equation}
where $|\psi^- \rangle$ = $\frac{1}{\sqrt{2}} (|01 \rangle - |10 \rangle)$ is the singlet state; $0 <V \leq 1$; $|0\rangle$, $|1\rangle$ and $|2\rangle$ form an orthonormal basis in the Hilbert space in $\mathcal{C}^3$;  $|0\rangle$ and $|1\rangle$ form an orthonormal basis in the Hilbert space in $\mathcal{C}^2$ (they are eigenvectors of the operator $\sigma_z$); $\mathbb{I}_2 = |0\rangle \langle 0| + |1\rangle \langle 1|$.\\

Now consider the following two dichotomic POVM $E^1 \equiv \{ E_i^1 (i=0,1) | \sum_i E_i^1 = \mathbb{I}, 0 < E_i^1 \leq \mathbb{I} \}$ and $E^2 \equiv \{ E_j^2 (j=0,1) | \sum_j E_j^2 = \mathbb{I}, 0 < E_j^2 \leq \mathbb{I} \}$, where\\
$E_0^1 = \begin{pmatrix}
0 && 0 && 0 \\
0 && 1 && 0 \\
0 && 0 && \frac{1}{2} \\
\end{pmatrix}$ and let us assume that the corresponding outcome is $0$,\\

$E_1^1 = \begin{pmatrix}
1 && 0 && 0 \\
0 && 0 && 0 \\
0 && 0 && \frac{1}{2} \\
\end{pmatrix}$ and let us assume that the corresponding outcome is $1$. \\

On the other hand,\\

$E_0^2 = \begin{pmatrix}
\frac{1}{2} && \frac{1}{2} && 0 \\
\frac{1}{2} && \frac{1}{2} && 0 \\
0 && 0 && \frac{1}{2} \\
\end{pmatrix}$ and let us assume that the corresponding outcome is $0$, \\

$E_1^2 = \begin{pmatrix}
\frac{1}{2} && -\frac{1}{2} && 0 \\
-\frac{1}{2} && \frac{1}{2} && 0 \\
0 && 0 && \frac{1}{2} \\
\end{pmatrix}$ and let us assume that the corresponding outcome is $1$, \\

Here, the matrix $E_0^1$, $E_1^1$, $E_0^2$ and $E_1^2$ are written in the basis \{$|0\rangle$, $|1\rangle$, $|2\rangle$\}. Now if  Alice performs the POVMs corresponding to $A_0 = E^1$ and $A_1 = E^2$; and Bob performs the projective measurements corresponding to $B_0 = \sigma_z$ and $B_1 = \sigma_x$, then the white noise-BB84 family can be reproduced.

\end{document}